\newcommand{\be}{\begin{eqnarray}}
\newcommand{\ee}{\end{eqnarray}}
\def\eg{{\em e.g.},~}
\def\ie{{\em i.e.},~}
\def\tD{\tilde{D}}
\def\cf{{\em cf.},~}
\newcommand{\ben}{\begin{enumerate}}
\newcommand{\een}{\end{enumerate}}
\newcommand{\beq}{\begin{equation}}
\newcommand{\eeq}{\end{equation}}
\newcommand{\beqa}{\begin{eqnarray*}}
\newcommand{\eeqa}{\end{eqnarray*}}
\newcommand{\bit}{\begin{itemize}}
\newcommand{\eit}{\end{itemize}}
\newcommand{\bt}{\begin{tabular}{c}}
\newcommand{\btt}{\begin{tabular}}
\newcommand{\et}{\end{tabular}}
\newtheorem{corollary}{Corollary}
\newtheorem{lemma}{Lemma}
\newtheorem{theorem}{Theorem}
\begin{document}

\title{Variation of: \\
``The effect of caching on a model of\\
 content and access provider revenues in\\
information-centric networks"\thanks{This
research was supported by NSF CNS grant 1116626.}}

\author{
G. Kesidis\\
CS\&E and EE Depts \\
 Pennsylvania State University \\
 University Park, PA, USA \\
gik2@psu.edu
}

\maketitle

\begin{abstract}
This is a variation of the two-sided market
model of \cite{EconCom13}: Demand 
$D$ is concave in $\tilde{D}$ in (16) of \cite{EconCom13}.
So, in (5) of \cite{EconCom13} and after Theorem 2, 
take the parametric case $0 < a \leq 1$.
Thus, demand $D$ is both decreasing and concave in price $p$, 
and so the utilities ($U=pD$) are also concave in price.
Also, herein 
a simpler illustrative demand-response model is used in
Appendix A and B.
\end{abstract}

\section{Introduction}

In this paper, we consider a game between an 
Internet Service (access) Provider (ISP)
and content provider (CP) on a platform of end-user demand.
A price-concave demand-response is motivated based on 
the delay-sensitive applications that are expected to
be subjected to the assumed usage-priced priority
service over best-effort service.
Thus, we are considering
a two-sided market with multiclass demand wherein 
one class (that under consideration herein) is delay-sensitive.
Both the Internet and proposed Information Centric Network
(ICN, encompassing Content Centric Networking (CCN)) 
scenarios are considered. For our purposes,
the  ICN case is
basically different in the polarity of the side-payment (from
ISP to CP in an ICN) and, more importantly here,
in that content caching by the ISP is
incented. 

Pricing congestible commodities have been extensively
studied. For example,  in \cite{Johari10} a demand model is
is based on a ``cost" that is the sum of
a price and latency term. We herein take this relationship
to be an implicit one in which the latency factor is
also an increasing function of demand. The resulting
price-{\em concave} demand-response model is extended
to account for content caching. The corresponding
Nash equilibria are derived as a function of 
the caching factor.

\section{Problem Set-Up: The Internet model}\label{problem-setup}

Suppose there are two providers, 
one content (CP indexed 2) and the other access (ISP
indexed 1), with 
{\em common} consumer demand-response 
\cite{Economides08}\footnote{Leader-follower dynamics, rather than 
simultaneous play at the same time-scale, are considered in \cite{walrand09}. 
For the problem
setting of this paper, leader-follower dynamics
were considered by us in \cite{TelSys11} and  provider competition in \cite{ReArch10,KKF13}.}. 
First suppose that the demand response to price is linear:
\be\label{linear-DR}
D & = & D_{\max}-d (p_1 + p_2),
\ee
where $d$ is demand sensitivity to the price, $p_1$ and $p_2$ are,
respectively, the prices charged by the ISP and CP,
and $D_{\max}>0$ is the demand at zero usage based price\footnote{Note
that ISPs  are continuing to depart from pure flat-rate pricing
(based on maximum access bandwidth) for unlimited monthly volume,
\eg \cite{Shin06,ATT11}.}. 
Suppose the revenue of the ISP is
\be\label{content-based-rev1}
U_1 & = &  (p_1 + p_s)D,
\ee
where $p_s$ is the side payment from content to access provider.
Similarly, the revenue of the CP is
\be\label{content-based-rev2}
U_2 & = &  (p_2 - p_s)D.
\ee
Consider a noncooperative game played by the CP and ISP adjusting their
prices, respectively $p_2$ and $p_1$, to maximize
their respective revenues, with all
other parameters fixed. In particular, the fixed side-payment $p_s$ 
is here assumed regulated. Note that the utilities are linear
functions of $p_s$ so that if $p_s$ were under the control of one
of the players, $p_s$ would simply be set at an extremal value.

\begin{figure}
\includegraphics[width=2.75in]{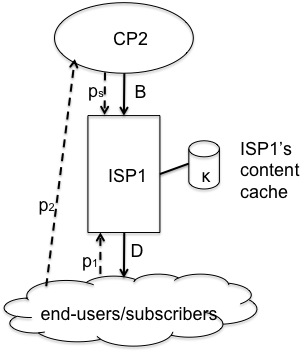}
\caption{ISP and CP game on a platform of end-user
demand-response}\label{ISP_vs_CP}
\end{figure}

The following simple result was shown in \cite{TelSys11,ReArch10}.

\begin{theorem}\label{old-thm}
The interior Nash equilibrium\footnote{In this paper, we do not 
consider boundary Nash equilibria, where at least one player is 
selecting an extremal value for one of their control parameters,
often resulting in that player essentially opting out of the game,
or maximally profiting from it at the expense of the other player.
The boundary equilibria are also specified in \cite{TelSys11}.}
is
\beqa
p_1^* = \frac{D_{\max}}{3d}-p_s &\mbox{and} & 
p_2^* = \frac{D_{\max}}{3d}+p_s
\eeqa
when
\be\label{p_s-assumption}
| p_s| & <&  \frac{D_{\max}}{3d},
\ee
with player utilities
\beqa
U_1^*,U_2^* & = & \frac{D_{\max}^2}{9d}.
\eeqa
\end{theorem}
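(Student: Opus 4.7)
The plan is to derive the Nash equilibrium directly from each player's first-order optimality condition in its own price and then to verify that the resulting stationary point is an interior maximum consistent with \eqref{p_s-assumption}. Since $D$ is affine in $(p_1,p_2)$, holding the opponent's price fixed makes $U_1=(p_1+p_s)(D_{\max}-d(p_1+p_2))$ a downward-opening quadratic in $p_1$, and symmetrically $U_2$ in $p_2$; the second derivatives are both $-2d<0$, so each first-order condition pinpoints the unique best response. Setting $\partial U_1/\partial p_1=0$ and $\partial U_2/\partial p_2=0$ yields the linear system $2p_1+p_2=D_{\max}/d-p_s$ and $p_1+2p_2=D_{\max}/d+p_s$.

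I would then solve this system by adding and subtracting: addition gives $3(p_1+p_2)=2D_{\max}/d$, pinning the aggregate equilibrium price at $2D_{\max}/(3d)$, while subtraction gives $p_2-p_1=2p_s$. Combining these immediately yields $p_1^*=D_{\max}/(3d)-p_s$ and $p_2^*=D_{\max}/(3d)+p_s$. The equilibrium demand is then $D^*=D_{\max}-d(p_1^*+p_2^*)=D_{\max}/3$, and since $p_1^*+p_s=p_2^*-p_s=D_{\max}/(3d)$, both utilities evaluate to $(D_{\max}/(3d))\cdot(D_{\max}/3)=D_{\max}^2/(9d)$, matching the claim.

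The remaining step is to confirm the \emph{interior} qualifier. Strict concavity in each own price rules out multiple local maxima, so the only thing that can fail is positivity of $p_1^*$ and $p_2^*$ (the equilibrium demand $D^*=D_{\max}/3$ is automatically positive). That positivity is exactly $|p_s|<D_{\max}/(3d)$, recovering \eqref{p_s-assumption}. I do not expect any substantive obstacle here --- the entire argument is standard quadratic best-response calculus --- and the only step worth being careful about is tracking precisely how the nonnegativity of the equilibrium prices yields the sharp bound on $p_s$ stated in the hypothesis.
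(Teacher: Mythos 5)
Your proof is correct and is the standard derivation this result admits: the paper itself omits the proof (citing \cite{TelSys11,ReArch10}), and your first-order-condition calculation, concavity check, and identification of $|p_s|<D_{\max}/(3d)$ with positivity of both equilibrium prices is exactly the intended argument. No gaps.
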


Note that this result allows
$p_s<0$, \ie net side payment
is from ISP to CP (remuneration for content instead of access
bandwidth). 
But in the Internet setting, we take $p_s>0$, whether
there is direct side-payment from CP to ISP (or, again, indirectly by
payment through  the peering contract 
between the residential ISP
and the ISP of the CP - a contract that penalizes for asymmetric
traffic exchange neutrally based on aggregate traffic volume).

In \cite{ReArch10,kesidis12}, we showed that 
the ISP may actually experience a reduction in revenue/utility
with the introduction of side payments, using
a communal demand model that had different demand-sensitivity-to-price
parameters $d$ per provider type and also multiple providers of each type
(\ie provider competition).
Such a model was also considered in \cite{IFIP11}.

Consider a concave demand response to price, \eg
\be
D & = & 
\left(\frac{1}{D_{\max}-(p_1 + p_2)d} + a
\right)^{-1},
~~a\geq 0,
\label{concave-demand}
\ee
where 
\beqa
p_{\max} & = & D_{\max}/d ~~\mbox{when $a=0$.}
\eeqa

The following is a simple extension of Theorem \ref{old-thm}
accommodating (\ref{concave-demand}).

\begin{theorem}\label{concave-thm}
For utilities (\ref{content-based-rev1}) and 
(\ref{content-based-rev2}),
the interior Nash equilibrium for a strictly concave demand response $D$ is
\be\label{pstar-k}
p^*_1 + p_s ~=~   p^*/2 ~=~ p^*_2 -p_s,
\ee
where $p^*=p^*_1+p^*_2$ solves 
\be\label{pstar-cond}
D(p^*)+D'(p^*) p^* /2 & = & 0.
\ee
and $|p_s|<p^*/2$.
\end{theorem}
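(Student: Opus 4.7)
The plan is to derive the equilibrium from the first-order conditions (FOCs) of each player's revenue maximization, exploiting the symmetry built into the common demand $D(p_1+p_2)$.

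First I would write down the FOCs. Since $D$ depends only on $p := p_1+p_2$, we have $\partial D/\partial p_i = D'(p)$ for $i=1,2$. Differentiating (\ref{content-based-rev1}) with respect to $p_1$ and (\ref{content-based-rev2}) with respect to $p_2$ gives
\beqa
\frac{\partial U_1}{\partial p_1} &=& D(p) + (p_1+p_s)D'(p) = 0, \\
\frac{\partial U_2}{\partial p_2} &=& D(p) + (p_2-p_s)D'(p) = 0.
\eeqa
Subtracting these two FOCs and using that $D'(p)<0$ (since strict concavity together with the form in (\ref{concave-demand}) makes $D$ strictly decreasing, hence $D'\neq 0$) yields $p_1+p_s = p_2-p_s$. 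This is precisely (\ref{pstar-k}): each side equals $p^*/2$ where $p^* = p_1^* + p_2^*$. Substituting $p_1+p_s = p^*/2$ back into the ISP's FOC produces the scalar equation (\ref{pstar-cond}) for $p^*$.

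Next I would verify that this critical point is indeed a maximum for each player, i.e.\ check the second-order conditions. With $p_2$ fixed,
\beqa
\frac{\partial^2 U_1}{\partial p_1^2} &=& 2D'(p) + (p_1+p_s)D''(p).
\eeqa
Under strict concavity $D''<0$, and since $p_1+p_s = p^*/2 > 0$ at the candidate equilibrium together with $D'<0$, both terms are negative, so $U_1$ is strictly concave in $p_1$ locally. The analogous computation handles $U_2$. Thus each player's best response is uniquely characterized by their FOC, confirming the interior Nash equilibrium.

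Finally, I would note that the interiority condition $|p_s|<p^*/2$ is exactly what is needed to ensure $p_1^* = p^*/2 - p_s > 0$ and $p_2^* = p^*/2 + p_s > 0$, so that the solution obtained from the FOCs lies strictly in the interior of the feasible region rather than on a boundary (excluded here per the footnote following Theorem \ref{old-thm}). The main subtlety, and essentially the only non-routine step, is ensuring that (\ref{pstar-cond}) has a positive solution $p^*$; this follows because $D(0)>0$, $D'<0$, and $D(p^*)\to 0$ as $p^*\to p_{\max}$, so by continuity the function $D(p) + (p/2)D'(p)$ changes sign on $(0,p_{\max})$.
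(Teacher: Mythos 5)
Your proof is correct and follows exactly the route the paper intends (the paper states Theorem \ref{concave-thm} without proof, as a ``simple extension'' of Theorem \ref{old-thm}): write the simultaneous first-order conditions for the common demand $D(p_1+p_2)$, subtract them and use $D'<0$ to get $p_1^*+p_s=p_2^*-p_s=p^*/2$, then substitute back to obtain (\ref{pstar-cond}). The only point worth tightening is that your second-order check is local, whereas a Nash equilibrium needs a global best response; this is easily repaired since $U_1=(p_1+p_s)D$ is concave in $p_1$ on the whole region where $p_1+p_s\geq 0$ and is nonpositive where $p_1+p_s<0$, so the critical point is indeed a global maximizer (and likewise for $U_2$).
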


For the example of (\ref{concave-demand}) with $a\geq 0$,
\be
p^* &  = & \frac{4aD_{\max}+3-\sqrt{8aD_{\max}+9}}{4ad}
\label{concave-pstar}
\ee
Note that  simply
by L'Hopital's rule, 
$\lim_{a\rightarrow 0}p^* = \tfrac{2}{3}D_{\max}/d = 
\tfrac{2}{3}p_{\max}$,
which is
consistent with Theorem \ref{old-thm}.
Again, under communal demand response with only one provider
of each type,
neither $p^*=p_1^* + p_2^*$ nor $U_1^*$ depend on the side payment $p_s$.
In an illustrative example of Appendix A, the parameter 
\beqa
a & = & (B-\lambda)^{-1},
\eeqa
where $B$ is a possibly congested bandwidth resource,
parameter $0\leq \lambda< B$,  and demand
$D\leq \min\{B-\lambda,D_{\max}\}$.

\section{ICN model}\label{ICN-sec}

Again, in an ICN, residential users request content 
(or, more generally, information regarding application services) 
of the ISP/resolver, 
and the ISP/resolver
decides the content provider.
Therefore in an ICN, it's reasonable to
assume that the side-payment is from ISP to CP,
\ie $p_s <0$.
Also, the ISP is motivated to cache content, unlike 
for our simple Internet case, to reduce the
side payment (\ie avoid paying for, \eg the networking
costs of the ISP-selected CP to transmit the user-requested  content).
Suppose that the ISP decides to cache a fraction $\kappa$ of
the content and this results in lower delay between the CP
and ISP, and a lower required side-payment to the CP,
\cf (\ref{ICN-revenue}). 
If we model mean delay as $1/(B-D)$ \cite{Wolff89}, where $B$ is
the service capacity between CP and ISP, then with
caching factor $\kappa$, this delay is reduced to
$1/(B-(1-\kappa)D)$. For the models of Appendix B,
the demand response:
\begin{itemize}
\item is increasing in caching factor $\kappa$,
\item is concave in price for $\kappa\in[0,1)$, and
\item tends to linear in price (\ref{linear-DR}) as   $\kappa\rightarrow 1$.
\end{itemize}
In an illustrative example of Appendix B, the demand parameter
in (\ref{concave-demand}) is
\beqa
a  & = & (1-\kappa)(B-\lambda)^{-1}.
\eeqa
Note that  neither $D_{\max}$ nor $p_{\max}$ are  assumed
dependent on $\kappa$.
Because of ISP caching, the ISP and CP utilities generalize to
\be
U_1 & = &  (p_1 + (1-\kappa)p_s)D - c(\kappa), \label{ICN-revenue}\\
U_2 & = &  (p_2 - (1-\kappa)p_s)D, \nonumber
\ee
again with $p_s <0$, where
$c(\kappa)$ is the cost of caching borne by the ISP.
In Appendix C, we argue that $c$ is convex in $\kappa$.

Note that
the caching cost $c$ component of $U_1$ does not depend on
$p_2$ or $p_1$, and
$|p_s|<p^*/2$  implies
  $|(1-\kappa)p_s|<p^*/2$.
So, we can 
use the results of Theorem \ref{concave-thm} and (\ref{concave-pstar}) 
here,
with parameters $(1-\kappa)p_s$  and
$a =  (1-\kappa)(B-\lambda)^{-1}$
instead of $p_s$ and 
$a=(B-\lambda)^{-1}$, respectively, 
to obtain the utilities $U_1^*,U_2^*$ at
Nash equilibrium, $p_1^*,p_2^*$.
We can then consider how $U_1^*,U_2^*$ depend 
on the caching factor $\kappa$.


\section*{Appendix A: Explanation of concave demand response}

Consider a price-concave demand response $\tD$. In particular,
for price  $p\in[0,p_{\max}]$, consider the linear case
let 
\be\label{tD-def}
\tD(p)~:=  D_{\max} - pd ~=D_{\max}(1-p/p_{\max}) ~\geq 0.
\ee
Suppose  that the demand $D$ satisfies
\beqa
D  & = & [g(D)\tD]^+ ,
\eeqa
where $g$ is a decreasing and concave factor, not dependent
of price, accounting
for demand loss due to congestion.
For example \cite{EconCom13}, 
\beqa
g(D) & = &\frac{1-\lambda/(B-D)}{1-\lambda/B},
\eeqa
where the term $1/(B-D)$ is taken from the queueing delay of
an M/M/1 queue with mean arrival rate $D$ and
mean service rate $B>D$ \cite{Wolff89}. 
More simply, we can take 
\be
g(D) & = & \frac{B-\lambda-D}{B-\lambda}, \label{linear-g}
\ee
where $B$ is the available bandwidth resource, and
parameter $0\leq \lambda < B$. 
Note that for these examples: 
\begin{itemize}
\item $g(0)=1$, 
\item $g(B-\lambda)=0$, so that
\item 
$0\leq D\leq \min\{B-\lambda,D_{\max}\}$, and
\item $g$ is non-negative, decreasing and concave.
\end{itemize}

\begin{lemma}\label{concave-D-lemma}
If $D   =  g(D)\tD >0 $ with $g$ 
nonnegative, decreasing and concave and $\tD \geq 0$,
then $D$ is increasing and concave in $\tD$.
\end{lemma}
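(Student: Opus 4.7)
My plan is to treat the defining relation $D = g(D)\tD$ as specifying $D$ implicitly as a function of $\tD$, and then to differentiate twice. Well-definedness of this implicit function on the range where $g>0$ comes from the observation that, for fixed $\tD \geq 0$, the map $D \mapsto g(D)\tD - D$ is strictly decreasing in $D$ (since $g$ is decreasing and $-D$ is strictly decreasing), so it has at most one zero, and $D=0$ is a zero when $\tD=0$ while for $\tD > 0$ continuity plus the sign change at $D$ near where $g(D)=0$ gives existence.

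For monotonicity, I would differentiate $D = g(D)\tD$ in $\tD$ to obtain $dD/d\tD = g(D) + g'(D)\,\tD\,(dD/d\tD)$, and solve:
\[
\frac{dD}{d\tD} \;=\; \frac{g(D)}{1-g'(D)\tD}.
\]
Since $g\geq 0$, $g'\leq 0$ (because $g$ is decreasing), and $\tD\geq 0$, the numerator is nonnegative and the denominator is at least $1$, giving $dD/d\tD\geq 0$ (strictly positive when $D>0$).

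Differentiating a second time and using the first-order identity $(dD/d\tD)(1-g'(D)\tD)=g(D)$ to combine two of the resulting terms, I expect the clean form
\[
\frac{d^2 D}{d\tD^2} \;=\; \frac{g(D)\bigl[\,2g'(D) + g''(D)\,\tD\,(dD/d\tD)\,\bigr]}{(1-g'(D)\tD)^2}.
\]
Concavity then follows by a sign count: the denominator is positive, $g(D)\geq 0$, $g'(D)\leq 0$ by monotonicity of $g$, and $g''(D)\leq 0$ by concavity of $g$, while $\tD$ and $dD/d\tD$ are nonnegative from the previous step. The bracket is therefore $\leq 0$, so $d^2 D/d\tD^2 \leq 0$.

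The main obstacle is that the lemma only assumes $g$ concave, so $g''$ need not exist everywhere in the classical sense. I would handle this by approximating $g$ with a sequence of smooth nonnegative decreasing concave functions $g_n$ (via mollification, truncated if needed to preserve nonnegativity and monotonicity), applying the argument above to each $g_n$ to obtain increasing and concave implicit solutions $D_n(\tD)$, and then passing to the limit, since pointwise limits of monotone (resp.\ concave) functions are monotone (resp.\ concave), so the limiting $D(\tD)$ inherits both properties.
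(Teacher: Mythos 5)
Your proof is correct and follows essentially the same route as the paper: implicit differentiation of $D=g(D)\tD$ once to get $D'=g/(1-\tD g')\geq 0$ and a second time followed by a sign count using $g\geq 0$, $g'\leq 0$, $g''\leq 0$; your simplified expression $D''=g\,(2g'+g''\,\tD\,D')/(1-\tD g')^{2}$ is an algebraically cleaner (and correct) form of the paper's display. Your additional remarks on existence of the implicit solution and on smoothing a merely concave $g$ address regularity points the paper glosses over, but they do not change the substance of the argument.
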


\begin{proof}
Let $g' = \mbox{d}g/\mbox{d}D$ and
$D' = \mbox{d}D/\mbox{d}\tD$.
By direct differentiation with respect to $\tD$:
\beqa
D'  & = & g + \tD g'D' \\
\Rightarrow ~~D'  & = &  \frac{g}{1- \tD g'} ~\geq 0\\
\Rightarrow ~~D''  & = &  \frac{(1- \tD g')g' + (\tD g'' + g')g}
{(1- \tD g')^2} D' ~~\leq 0,
\eeqa
\end{proof}

\begin{corollary}
Under Lemma \ref{concave-D-lemma}  and
if $\tD$ is decreasing and concave in price $p$, then
$D$ is non-negative and decreasing in $p$, and both $D$ and $pD$ 
are concave in $p$.
\end{corollary}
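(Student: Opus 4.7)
The plan is to verify the four assertions in sequence, deducing each from Lemma \ref{concave-D-lemma} together with the assumed behavior of $\tD(p)$. Non-negativity of $D$ is immediate: with $\tD\geq 0$ and $g\geq 0$, the defining relation $D = g(D)\tD$ forces $D\geq 0$. Monotonicity in $p$ follows from one application of the chain rule, writing $\dd D/\dd p = (\dd D/\dd \tD)(\dd \tD/\dd p)$; Lemma \ref{concave-D-lemma} gives the first factor as non-negative, while the assumed decreasing behavior of $\tD$ in $p$ makes the second factor non-positive, so $D$ is non-increasing in $p$.

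For concavity of $D$ in $p$, the cleanest route is the standard composition rule: $D$ is concave and non-decreasing in $\tD$ by Lemma \ref{concave-D-lemma}, and $\tD$ is concave in $p$ by hypothesis, so the composition $p\mapsto D(\tD(p))$ is concave. If one prefers an explicit calculation, expand
\[
\frac{\dd^2 D}{\dd p^2} \;=\; \frac{\dd^2 D}{\dd \tD^2}\Bigl(\frac{\dd \tD}{\dd p}\Bigr)^{\!2} + \frac{\dd D}{\dd \tD}\cdot\frac{\dd^2 \tD}{\dd p^2},
\]
and observe that each summand is non-positive: the first because $\dd^2 D/\dd \tD^2\leq 0$ (Lemma \ref{concave-D-lemma}) and the square is non-negative, the second because $\dd D/\dd \tD\geq 0$ (Lemma \ref{concave-D-lemma}) while $\dd^2 \tD/\dd p^2\leq 0$ by hypothesis.

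Finally, for concavity of $pD$ in $p$, I would differentiate twice with respect to $p$ to obtain $(pD)'' = 2D' + pD''$. The first summand is non-positive because $D$ is decreasing in $p$ (step two above), and the second is non-positive because $p\geq 0$ and $D$ is concave in $p$ (step three), so $(pD)''\leq 0$ on the domain $p\in[0,p_{\max}]$.

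The only subtle point is the composition step for concavity of $D\circ \tD$: concavity of a composition of concave functions requires the outer map to be monotone in the correct direction, and it is precisely Lemma \ref{concave-D-lemma} that supplies the needed non-decreasing behavior of $D$ in $\tD$. Beyond that, the argument is just routine sign-checking via the chain rule.
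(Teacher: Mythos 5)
Your proof is correct and follows essentially the same route as the paper: the chain rule for the first derivative and the expansion $D''(\tD)(\dd\tD/\dd p)^2 + D'(\tD)\,\dd^2\tD/\dd p^2$ with the same sign checks. You additionally spell out the concavity of $pD$ via $(pD)''=2D'+pD''\leq 0$ for $p\geq 0$, a step the paper's proof leaves implicit even though it is part of the statement.
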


\begin{proof}
By the above lemma,  $D(\tD)\geq 0$ is increasing and concave.
Again, by direct differentiation:
\beqa
\frac{\partial D(\tD)}{\partial p}  & = & D'(\tD) 
\frac{\partial \tD}{\partial p}  ~\leq 0 \\
\frac{\partial^2 D(\tD)}{\partial p^2}  & = & D''(\tD) 
\left(\frac{\partial \tD}{\partial p}\right)^2 
+ D'(\tD)  
\frac{\partial^2 \tD}{\partial p^2} 
~\leq 0 
\eeqa
\end{proof}

For linear demand-response to price
 (\ref{tD-def}) and the linear congestion factor (\ref{linear-g}), 
\be
D(p) & = & (\tD(p)^{-1} + (B-\lambda)^{-1})^{-1} 
\label{demand-response0}\\
 & = & (B-\lambda) \left(1-\frac{1}{1+(D_{\max}-dp)/(B-\lambda)}  \right)
\nonumber
\ee
which is decreasing and concave in $p$, with
\beqa
D(0) & = & (D_{\max}^{-1} + (B-\lambda)^{-1})^{-1}
~\leq~ \min \{ B-\lambda,~D_{\max}\}.
\eeqa
It's also easy to see that 
\beqa
\lim_{B\rightarrow \infty, ~p\rightarrow 0}
D & = &  D_{\max}.
\eeqa

\section*{Appendix B: Explanation of demand increasing
in caching factor}

As a result of ISP caching, only a fraction $(1-\kappa)$ of the demand
$D$ is transmitted through the the bandwidth $B$ between ISP and CP.
So, the congestion factor (\ref{linear-g}) is modified to
\beqa
g_{\kappa}(D) & = & \frac{B-\lambda-(1-\kappa)D}{B-\lambda}\\
& = & \frac{(B-\lambda)/(1-\kappa) - D}{(B-\lambda)/(1-\kappa)}\\
\eeqa
So, solving 
$D=\tD g_\kappa (D) $
results in (\ref{demand-response0}) with
$B-\lambda$ replaced by
$(B-\lambda)/(1-\kappa)$:
\be\label{demand-response1}
D(p) & = & (\tD(p)^{-1} + (1-\kappa)(B-\lambda)^{-1})^{-1}.
\ee
Thus, if positive $\kappa<1$, the demand 
is concave in price $p$ and increasing in $\kappa$.
On the other hand, as  $\kappa\rightarrow 1$, the demand
tends to linear  in price (\ref{linear-DR}).

\begin{lemma}
Generally, if the congestion factor
$g$ is a decreasing function, then the demand
$D$ increases with caching factor $\kappa$. 
\end{lemma}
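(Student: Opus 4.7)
The plan is to view $D$ as defined implicitly through the fixed-point equation $D = \tD\, g_{\kappa}(D)$ where, as in the setup preceding the lemma, caching reduces the effective load on the CP-to-ISP bandwidth so that $g_{\kappa}(D) = g((1-\kappa)D)$. I would then differentiate this relation with respect to $\kappa$, holding $\tD$ fixed (since $\tD$ is the uncongested linear response and does not depend on $\kappa$), and read off the sign of $\mbox{d}D/\mbox{d}\kappa$ from the signs of the two partial derivatives.

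Concretely, define $F(D,\kappa) := D - \tD\, g((1-\kappa)D)$, so that $F(D,\kappa)=0$ at the demand of interest. Computing,
\beqa
\frac{\partial F}{\partial D} & = & 1 - \tD\,(1-\kappa)\, g'((1-\kappa)D), \\
\frac{\partial F}{\partial \kappa} & = & \tD\, D\, g'((1-\kappa)D).
\eeqa
Since $g$ is decreasing we have $g' \leq 0$, so $\partial F/\partial D \geq 1 > 0$ and $\partial F/\partial \kappa \leq 0$ (using $\tD,D \geq 0$ from the setup of Appendix A). The implicit function theorem then yields
\beqa
\frac{\mbox{d}D}{\mbox{d}\kappa} & = & -\frac{\partial F/\partial \kappa}{\partial F/\partial D} ~\geq ~ 0,
\eeqa
which is the claimed monotonicity.

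I expect no real obstacle: the only thing to be careful about is identifying the correct general form $g_{\kappa}(D) = g((1-\kappa)D)$ as the natural extension of the motivating example (\ref{linear-g}), and noting that the fixed point is well-defined and stable precisely because the positivity of $\partial F/\partial D$ gives a unique crossing. Concavity of $g$ is not needed here; only monotonicity of $g$ is used, consistent with the lemma's hypothesis.
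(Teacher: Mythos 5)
Your argument is correct and reaches the right conclusion, but it is a genuinely different route from the paper's. You differentiate the fixed-point relation $F(D,\kappa):=D-\tD\,g((1-\kappa)D)=0$ and apply the implicit function theorem, using $g'\leq 0$ to get $\partial F/\partial D\geq 1>0$ and $\partial F/\partial\kappa\leq 0$, hence $\dd D/\dd\kappa\geq 0$; the sign computations are right. The paper instead gives a non-calculus comparison argument: it observes that $g_{\kappa}(D)=g((1-\kappa)D)$ is pointwise increasing in $\kappa$, so $\tD g_{\kappa}(D_0)>\tD g_0(D_0)=D_0$, and then derives a contradiction from the assumption $D_{\kappa}\leq D_0$ by chaining $D_{\kappa}\leq D_0<\tD g_{\kappa}(D_0)\leq \tD g_{\kappa}(D_{\kappa})=D_{\kappa}$. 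The trade-off: the paper's argument uses only monotonicity of $g$ (the literal hypothesis of the lemma, which does not assume differentiability) and yields the strict inequality $D_{\kappa}>D_0$, whereas your version quietly strengthens the hypothesis to $g$ differentiable and, as written, only delivers weak monotonicity ($\dd D/\dd\kappa\geq 0$; strictness would need $g'<0$ and $\tD D>0$). On the other hand, your computation is more informative where it applies --- it gives an explicit formula for $\dd D/\dd\kappa$ and, as you note, the bound $\partial F/\partial D\geq 1$ also certifies uniqueness and stability of the fixed point, which the paper takes for granted. If you want your proof to match the stated generality, either add the differentiability caveat or fall back on the order-theoretic comparison.
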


\begin{proof}
First note  that $g_\kappa(D)=g_0((1-\kappa)D):=g((1-\kappa)D)$, 
is decreasing in $(1-\kappa)D $
(hence increasing in caching factor $\kappa$). 
Consider the solution
\be
D_{\kappa} & = & \tD g_{\kappa}(D_{\kappa}) \label{D_k-defn}
\ee
and note that $D_{\kappa}\geq D_0$. Now,
\beqa
D_0 ~= ~\tD g_0(D_0) & < & \tD g_0((1-\kappa)D_0) ~=~\tD g_{\kappa}(D_0).
\eeqa
So, if $D_{\kappa}\leq D_0$, then we would have
\beqa
D_{\kappa} ~\leq~ D_0 & < & \tD g_{\kappa}(D_0)
~\leq ~ \tD g_{\kappa}(D_{\kappa}),
\eeqa
which contradicts the definition of $D_{\kappa}$ in 
(\ref{D_k-defn}).
\end{proof}

\section*{Appendix C: Convexity of cost of caching as a function of
caching factor}

Assume that the cost of caching is proportional to the number 
of cached items (content),
in turn proportional to the (mean) amount of memory required to store them.
For a fixed population of $N$ end-users (a proximal group served by
an ISP), let $\pi(j)$ be the proportion of the items that
will soon be of interest to precisely $j$ end-users.
Finally, suppose the ISP naturally prioritizes its cache to hold
the most popular content.
So, a ``caching factor" $\kappa$, based on all-or-none decisions to cache
content of the same popularity, would satisfy
\beqa
\kappa & \propto & \sum_{j=N-f(\kappa)}^N j\pi(j).
\eeqa
for some $f(\kappa)\in\{0,1,2,...,N\}$.
The cost of caching would be proportional to the number of cached items, \ie
\beqa
c(\kappa) & \propto & \sum_{j=N-f(\kappa)}^N \pi(j).
\eeqa

Suppose that the great majority of potentially desired content is only minimally popular,
\ie $\pi(j)$ is 
decreasing\footnote{Note that this general assumption 
obviously accommodates
the empirically observed
 Zipf distribution for content popularity, \eg
\cite{DC10}.}
We now argue that the caching 
cost $c(\kappa)$ is convex  and increasing 
for the simplified continuous scenario 
ignoring the (positive) constants of proportionality:
\beqa
\kappa ~ = ~ \int_{N-f(\kappa)}^N z\pi(z)\mbox{d}z 
&\mbox{and}&
c(\kappa) ~ = ~ \int_{N-f(\kappa)}^N \pi(z)\mbox{d}z,
\eeqa
with $c(0)=0$ and $c(1)=1$.
By differentiating successively, we get
\be
1 & = &  (N-f(\kappa))\pi(N-f(\kappa))f'(\kappa) \label{f-equ}\\
c'(\kappa) & = & \pi(N-f(\kappa))f'(\kappa)\nonumber \\
\Rightarrow 1 & = &  (N-f(\kappa))c'(\kappa)\nonumber \\
\Rightarrow c''(\kappa) & = &  f'(\kappa)(N-f(\kappa))^{-2} \label{c-equ}
\ee
Note that $f'>0$ by (\ref{f-equ}) and therefore
$c''>0$ by (\ref{c-equ}).


\end{document}